\DeclareMathOperator{\lcm}{lcm}
\def\divides{{\, | \, }}
\def\Enn{{\mathbb{Z}}}
\title{Periodicity in Rectangular Arrays}
\author{Guilhem Gamard\\
LIRMM \\
CNRS, Univ. Montpellier \\
UMR 5506, CC 477 \\
161 rue Ada \\
34095 Montpellier Cedex 5 \\
France \\
\href{mailto:guilhem.gamard@lirmm.fr}{\tt guilhem.gamard@lirmm.fr} \\
\and
Gwena\"el Richomme \\
LIRMM\\
CNRS, Univ. Montpellier\\ 
UMR 5506, CC 477 \\ 
161 rue Ada \\ 
34095 Montpellier Cedex 5 \\ 
France \\
and\\
Univ. Paul-Val\'ery Montpellier 3 \\ Route de Mende \\
34199 Montpellier Cedex 5 \\ France \\
\href{mailto:gwenael.richomme@lirmm.fr}{\tt gwenael.richomme@lirmm.fr} \\
\and
Jeffrey Shallit and Taylor J. Smith\\
School of Computer Science \\
University of Waterloo \\
Waterloo, ON  N2L 3G1 \\
Canada \\
\href{mailto:shallit@cs.uwaterloo.ca}{\tt shallit@cs.uwaterloo.ca} \\
\href{mailto:tj2smith@uwaterloo.ca}{\tt tj2smith@uwaterloo.ca} 
}
\begin{document}

\maketitle

\theoremstyle{plain}
\newtheorem{theorem}{Theorem}
\newtheorem{corollary}[theorem]{Corollary}
\newtheorem{lemma}[theorem]{Lemma}
\newtheorem{proposition}[theorem]{Proposition}

\theoremstyle{definition}
\newtheorem{definition}[theorem]{Definition}
\newtheorem{example}[theorem]{Example}
\newtheorem{conjecture}[theorem]{Conjecture}

\theoremstyle{remark}
\newtheorem{remark}[theorem]{Remark}
\newtheorem{fact}[theorem]{Fact}

\begin{abstract}
We discuss several two-dimensional generalizations of
the familiar Lyndon-Sch\"utzenberger periodicity theorem for words.  We consider
the notion of primitive array (as one that cannot be expressed as the
repetition of smaller arrays).  We count the number of $m \times n$
arrays that are primitive.  Finally, we show that one can test primitivity
and compute the primitive root of an array in linear time.

\medskip

\noindent {\it Key words and phrases}:
picture, primitive word, Lyndon-Sch\"utzenberger
theorem, periodicity, enumeration, rectangular array.

\medskip

\noindent {\it AMS 2010 Classification}:  Primary 68R15; Secondary
68W32, 68W40, 05A15.
\end{abstract}

\section{Introduction}

Let $\Sigma$ be a finite alphabet.  One very general version of the
famous Lyndon-Sch\"utzenberger
theorem \cite{Lyndon&Schutzenberger:1962} can be stated as follows:

\begin{theorem}
Let $x, y \in \Sigma^+$.  Then the following five conditions are
equivalent:

(1) $xy = yx$;

(2) There exist $z \in \Sigma^+$ and integers $k, \ell > 0$ such that
$x = z^k$ and $y = z^\ell$;

(3) There exist integers $i, j > 0$ such that $x^i = y^j$;

(4) There exist integers $r, s > 0$ such that $x^r y^s = y^s x^r$;

(5) $x\{ x,y\}^* \ \cap \ y\{x,y\}^* \not= \emptyset$.
\label{ls-thm}
\end{theorem}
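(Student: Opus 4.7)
My plan is to make condition (2) the hub of the equivalences, since (2) is plainly the strongest statement. First I would verify the four direct implications $(2) \Rightarrow (1), (3), (4), (5)$: substituting $x = z^k$ and $y = z^\ell$ turns $xy$ and $yx$ both into $z^{k+\ell}$, gives $x^\ell = z^{k\ell} = y^k$ for (3), makes (4) follow from (1), and makes $xy = yx$ a witness for (5). These verifications are routine bookkeeping.

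The core of the proof is $(1) \Rightarrow (2)$, the classical Lyndon--Sch\"utzenberger statement, which I would prove by strong induction on $|x|+|y|$. Assume $xy = yx$ and, without loss of generality, $|x| \ge |y|$. Comparing the two sides of $xy=yx$ position by position shows that $y$ is a prefix of $x$, so we may write $x = yu$ with $|u| < |x|$. Substituting into $xy = yx$ gives $yuy = yyu$, and cancelling the shared prefix $y$ on the left yields $uy = yu$. If $u$ is empty then $x = y$ and we take $z = x$; otherwise $|u| + |y| < |x| + |y|$, so by the induction hypothesis there exist $z$ and positive integers $a,b$ with $u = z^a$ and $y = z^b$, whence $x = yu = z^{a+b}$.

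With the main direction in hand, I would close the loop back to (2) from each of (3), (4), (5). For $(4) \Rightarrow (2)$: the equation $x^r y^s = y^s x^r$ is exactly condition (1) applied to the pair $(x^r, y^s)$, so by the case already proved there is a word $w$ with $x^r = w^a$ and $y^s = w^b$; since the primitive root of a word is unique, both $x$ and $y$ must be powers of the primitive root of $w$, giving (2). The implication $(3) \Rightarrow (4)$ is immediate (take $r = i$, $s = j$, both sides equalling $x^{2i} = y^{2j}$), so (3) also reduces to (2) via this route. For $(5) \Rightarrow (2)$, take $xu = yv$ with $u,v \in \{x,y\}^*$; I would argue by strong induction on $|x|+|y|$, assuming $|x| \le |y|$ so that $y = xs$ for some $s$, deducing $u = sv$, and unfolding the leftmost factorisation of $u$ to produce either a direct equality $x=y$ or a shorter instance of condition (5), which by induction gives a common root.

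The main obstacle is the induction step in $(1) \Rightarrow (2)$: one needs the correct cancellation to reduce $(x,y)$ to a smaller commuting pair without losing nontriviality, and the substitution $x = yu$ followed by cancellation on the left is the key move. A secondary subtlety is the reduction $(4) \Rightarrow (2)$, where one must invoke uniqueness of the primitive root to pass from the conclusion ``$x^r,y^s$ are powers of a common word'' back to ``$x,y$ are powers of a common word''; overlooking that step is the most common gap in informal presentations of the theorem.
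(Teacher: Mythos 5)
Your proof is correct, but it takes a genuinely different route from the paper's. The paper proves very little from scratch: it cites the equivalence of (1), (2), (3) from a textbook, identifies (5) with the defect theorem, and its only explicit content is the cycle (3) $\Rightarrow$ (4) $\Rightarrow$ (5) $\Rightarrow$ (1), the last step being a prefix-cancellation induction on $|xy|$. You instead make (2) the hub: you derive the four easy implications out of (2), give the classical inductive proof of (1) $\Rightarrow$ (2) (which the paper only cites), and pull (3), (4), and (5) back to (2) separately. Your (5) $\Rightarrow$ (2) induction has essentially the same combinatorial core as the paper's (5) $\Rightarrow$ (1): write $y = xs$, cancel the common prefix $x$, and recurse on the pair $(x,s)$. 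The cost of your routing is that (4) $\Rightarrow$ (2) must pass through uniqueness of the primitive root to descend from ``$x^r$ and $y^s$ are powers of a common word'' to the same statement for $x$ and $y$ --- a step the paper avoids entirely by sending (4) forward to (5) rather than backward, and which you rightly single out as the delicate point. What your organization buys is self-containedness (every implication is actually proved) at the price of invoking the primitive-root fact; what the paper's cycle buys is economy, since the forward implications (3) $\Rightarrow$ (4) $\Rightarrow$ (5) are nearly free and all the hard work is concentrated in the single induction (5) $\Rightarrow$ (1).
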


\begin{proof}
For a proof of the equivalence of (1), (2), and (3), see, for
example \cite[Theorem 2.3.3]{Shallit:2009}.   

Condition (5) is essentially the ``defect theorem''; see, for
example, \cite[Cor.~1.2.6]{Lothaire:1983}.

For completeness, we now demonstrate the equivalence
of (4) and (5) to each other and to conditions
(1)--(3):

\medskip

\noindent (3) $\implies$ (4):   If $x^i = y^j$, then we immediately have 
$x^r y^s = y^s x^r$ with $r = i$ and $s = j$.

\medskip

\noindent (4) $\implies$ (5):  Let $z = x^r y^s$.  Then by (4) we have
$z = y^s x^r$.    So $z = x x^{r-1} y^s$ and $z = y y^{s-1} x^r$.
Thus $z \in x \{x,y\}^*$ and $z \in y \{x,y\}^*$.  So
$x\{ x,y\}^* \ \cap \ y\{x,y\}^* \not= \emptyset$.

\medskip

\noindent (5) $\implies$ (1):
By induction on the length of $|xy|$.  The base
case is $|xy| = 2$.  More generally, if $|x|=|y|$ then clearly
(5) implies $x=y$ and so (1) holds.  Otherwise without loss of
generality $|x| < |y|$.  Suppose $z \in x\{ x,y\}^*$ and
$z \in y\{x,y\}^*$.  Then $x$ is a proper prefix of $y$, so write
$y = xw$ for a nonempty word $w$.  Then $z$ has prefix $xx$
and also prefix $xw$.  Thus $x^{-1} z \in x \{ x, w \} ^*$ and
$x^{-1} z \in w \{ x, w \}^*$, where by $x^{-1} z$ we mean remove
the prefix $x$ from $z$.  So $x \{ x, w \} ^* \ \cap \ w \{ x, w \}^*
\not= \emptyset$, so by induction (1) holds for $x$ and $w$, so 
$xw = wx$.  Then $yx = (xw)x = x(wx) = xy$.
\end{proof}

A nonempty word $z$ is {\it primitive\/} if it cannot be written in the
form $z = w^e$ for a word $w$ and an integer $e \geq 2$.  We will need the
following fact (e.g., \cite[Prop.~1.3.1]{Lothaire:1983} or
\cite[Thm.~2.3.4]{Shallit:2009}):

\begin{fact}
Given a nonempty word $x$, the shortest word $z$ such that
$x = z^i$ for some integer $i \geq 1$ is primitive.   It is called
the {\it primitive root} of $x$, and is unique.
\end{fact}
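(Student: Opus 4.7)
The plan is to prove existence of a shortest $z$, then its primitivity, then uniqueness, in that order, invoking Theorem~\ref{ls-thm} only for the uniqueness step.

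First I would observe that the set $S = \{ z \in \Sigma^+ : x = z^i \text{ for some } i \geq 1\}$ is nonempty, since $x = x^1$ shows $x \in S$. Because the lengths of elements of $S$ form a nonempty set of positive integers, a shortest element exists; call it $z$.

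Next I would show $z$ is primitive by a one-line contradiction: if $z = w^e$ for some $w \in \Sigma^+$ and $e \geq 2$, then $x = z^i = w^{ei}$, so $w \in S$ and $|w| < |z|$, contradicting the minimality of $|z|$.

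For uniqueness, suppose $z_1, z_2 \in S$ are both of minimum length (so both are primitive by the previous paragraph), with $x = z_1^{i_1} = z_2^{i_2}$. Then condition (3) of Theorem~\ref{ls-thm} holds for the pair $(z_1, z_2)$, so by the equivalence with condition (2), there exist $z \in \Sigma^+$ and integers $k, \ell \geq 1$ with $z_1 = z^k$ and $z_2 = z^\ell$. Primitivity of $z_1$ forces $k = 1$ and primitivity of $z_2$ forces $\ell = 1$, so $z_1 = z = z_2$.

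There is no real obstacle here: the only nontrivial input is the implication (3)~$\Longrightarrow$~(2) from Theorem~\ref{ls-thm}, which has already been established. The rest is a minimality argument and a routine use of primitivity to collapse an exponent to $1$.
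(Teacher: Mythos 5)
Your proof is correct. Note that the paper does not actually prove this Fact---it simply cites the literature (Lothaire, Prop.~1.3.1, or Shallit, Thm.~2.3.4)---so there is no in-paper argument to compare against; your argument (minimality gives existence and primitivity, and the implication (3)~$\Longrightarrow$~(2) of Theorem~\ref{ls-thm} plus primitivity collapses the exponents to give uniqueness) is sound and is essentially the standard proof found in those references.
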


In this paper we consider generalizations of the
Lyndon-Sch\"utzenberger theorem and the notion of primitivity to
two-dimensional rectangular arrays (sometimes called {\it pictures} in
the literature).  For more about basic operations on these arrays, see,
for example, \cite{Giammarresi&Restivo:1997}.

\section{Rectangular arrays}

By $\Sigma^{m\times n}$ we mean the set of all $m \times n$ rectangular
arrays $A$ of elements chosen from the alphabet $\Sigma$.  Our arrays
are indexed starting at position $0$, so that $A[0,0]$ is the element
in the upper left corner of the array $A$.  We use the notation
$A[i..j,k..\ell]$ to denote the rectangular subarray with rows
$i$ through $j$ and columns $k$ through $\ell$.  If $A \in \Sigma^{m\times n}$,
then $|A| = mn$ is the number of entries in $A$.

We also generalize
the notion of powers as follows.  If $A \in \Sigma^{m \times n}$
then by $A^{p \times q}$ we mean the array 
constructed by repeating $A$ $pq$ times, in $p$ rows and $q$ columns.
More formally $A^{p \times q}$ is the $pm \times qn$ array $B$
satisfying $B[i,j] = A[i \bmod m,j \bmod n]$
for $0 \leq i < pm$ and $0 \leq j < qn$.
For example, if 
$$ A = \left[\begin{array}{ccc} 
	{\tt a} & {\tt b} & {\tt c} \\
	{\tt d} & {\tt e} & {\tt f}
	\end{array}\right] ,$$
then 
$$ A^{2\times 3} = \left[\begin{array}{ccccccccc} 
	{\tt a} & {\tt b} & {\tt c} & {\tt a} & {\tt b} & {\tt c} & {\tt a} & {\tt b} & {\tt c}\\
	{\tt d} & {\tt e} & {\tt f} & {\tt d} & {\tt e} & {\tt f} & {\tt d} & {\tt e} & {\tt f}\\
	{\tt a} & {\tt b} & {\tt c} & {\tt a} & {\tt b} & {\tt c} & {\tt a} & {\tt b} & {\tt c}\\
	{\tt d} & {\tt e} & {\tt f} & {\tt d} & {\tt e} & {\tt f} & {\tt d} & {\tt e} & {\tt f}
	\end{array}
	\right] .$$

We can also generalize the notation of concatenation of arrays, but now
there are two annoyances:  first, we need to decide if we are
concatenating horizontally or vertically, and second, to obtain a
rectangular array, we need to insist on a matching of dimensions.

If $A$ is an $m \times n_1$ array and $B$ is an $m \times n_2$ array,
then by $A \obar B$ we mean the $m \times (n_1 + n_2)$ array obtained
by placing $B$ to the right of $A$.

If $A$ is an $m_1 \times n$ array and $B$ is an $m_2 \times n$ array,
then by $A \ominus B$ we mean the $(m_1+m_2) \times n$ array obtained
by placing $B$ underneath $A$.

\section{Generalizing the Lyndon-Sch\"utzenberger theorem}

We now state our first generalization of the Lyndon-Sch\"utzenberger theorem
to two-dimensional arrays, which generalizes
claims (2), (3), and (4) of Theorem~\ref{ls-thm}.

\begin{theorem}
Let $A$ and $B$ be nonempty arrays.  Then the following three conditions
are equivalent:

(a) There exist
positive integers $p_1, p_2, q_1, q_2$ such that
$A^{p_1 \times q_1} = B^{p_2 \times q_2}$.

(b) There exist a nonempty array $C$
and positive integers $r_1, r_2, s_1, s_2$ such that
$A = C^{r_1 \times s_1}$ and $B = C^{r_2 \times s_2}$.

(c) There exist positive integers $t_1, t_2, u_1, u_2$ such that
$A^{t_1 \times u_1} \circ B^{t_2 \times u_2} =
B^{t_2 \times u_2} \circ A^{t_1 \times u_1}$
where $\circ$ can be either $\obar$ or $\ominus$.
\label{one}
\end{theorem}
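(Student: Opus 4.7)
My plan is to prove the cycle (a) $\Rightarrow$ (b) $\Rightarrow$ (c) $\Rightarrow$ (a). Throughout, I write $A$ as $m_A \times n_A$ and $B$ as $m_B \times n_B$; note that any identity $A^{p_1 \times q_1} = B^{p_2 \times q_2}$ forces $p_1 m_A = p_2 m_B$ and $q_1 n_A = q_2 n_B$, a fact I use repeatedly.

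For (a) $\Rightarrow$ (b), I set $T := A^{p_1 \times q_1} = B^{p_2 \times q_2}$. Directly from the definition of the power operation, $T[i, j] = T[i + m_A, j] = T[i + m_B, j]$ whenever the indices are valid, and analogously for horizontal shifts by $n_A$ or $n_B$. Thus each column of $T$, viewed as a word, has periods $m_A$ and $m_B$, and its length $p_1 m_A$ is a common multiple of $m_A$ and $m_B$, hence at least $\lcm(m_A, m_B) \geq m_A + m_B - \gcd(m_A, m_B)$. The one-dimensional Fine--Wilf theorem then forces each column to have period $d_m := \gcd(m_A, m_B)$, and symmetrically each row has period $d_n := \gcd(n_A, n_B)$. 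So $T$ is determined by the corner block $C := T[0..d_m - 1,\, 0..d_n - 1]$, and since $d_m \divides m_A$, $d_n \divides n_A$, $d_m \divides m_B$, $d_n \divides n_B$, I read off $A = C^{(m_A/d_m) \times (n_A/d_n)}$ and $B = C^{(m_B/d_m) \times (n_B/d_n)}$.

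For (b) $\Rightarrow$ (c), I simply pick $t_1 := r_2$, $t_2 := r_1$, $u_1 := u_2 := 1$; then $A^{r_2 \times 1} = C^{r_1 r_2 \times s_1}$ and $B^{r_1 \times 1} = C^{r_1 r_2 \times s_2}$ have matching row count, and each ordering of their $\obar$-concatenation equals $C^{r_1 r_2 \times (s_1 + s_2)}$. For (c) $\Rightarrow$ (a) I treat $\circ = \obar$; the case $\circ = \ominus$ is symmetric (swap the roles of rows and columns). Let $X := A^{t_1 \times u_1}$ and $Y := B^{t_2 \times u_2}$, which have the same number of rows by hypothesis. Reading $X \obar Y = Y \obar X$ row by row and writing $x_i, y_i$ for the $i$-th rows of $X$ and $Y$, I obtain $x_i y_i = y_i x_i$ in $\Sigma^*$; the implication $(1) \Rightarrow (2)$ of Theorem~\ref{ls-thm} expresses both rows as powers of a common word, from which $x_i^{|y_i|} = y_i^{|x_i|}$ follows by inspection of exponents. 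The crucial point is that $|x_i| = u_1 n_A$ and $|y_i| = u_2 n_B$ are independent of $i$, so the row-wise identities assemble into a single array equality $X^{1 \times u_2 n_B} = Y^{1 \times u_1 n_A}$; unfolding $X$ and $Y$ yields $A^{t_1 \times u_1 u_2 n_B} = B^{t_2 \times u_1 u_2 n_A}$, which is (a).

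The main obstacle I anticipate is the Fine--Wilf step in (a) $\Rightarrow$ (b): one must correctly reduce the two-dimensional periodicity of $T$ to one-dimensional periodicities along individual columns and rows, and then verify that the available common multiple already meets the length threshold $m_A + m_B - \gcd(m_A, m_B)$. Once that is done, the other implications are routine bookkeeping with the two multipliers, assisted by a single application of the classical Lyndon--Sch\"utzenberger theorem to each row (or column).
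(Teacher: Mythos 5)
Your argument is correct, but it reaches the conclusion by genuinely different routes in two of the three implications. For (a) $\implies$ (b) the paper never invokes Fine--Wilf: it views $A^{p_1\times 1}$ and $B^{p_2\times 1}$ as words over the alphabet of full-height columns, applies the one-dimensional Lyndon--Sch\"utzenberger theorem (condition (3) $\implies$ (2) of Theorem~\ref{ls-thm}) once horizontally to extract a common column-block $R$, and then applies it a second time vertically to the first $r$ columns; you instead apply Fine--Wilf to every column and every row of the common power $T$, using the (correct) inequality $\lcm(m_A,m_B)\geq m_A+m_B-\gcd(m_A,m_B)$, and this buys a sharper conclusion --- the common root you produce has the canonical dimensions $\gcd(m_A,m_B)\times\gcd(n_A,n_B)$, whereas the paper's argument only yields \emph{some} common root. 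For (c) $\implies$ (a) the paper pumps the commutation relation by induction to $A^{t_1\times iu_1}\obar B^{t_2\times ju_2}=B^{t_2\times ju_2}\obar A^{t_1\times iu_1}$, chooses $i,j$ to equalize widths, and truncates to a common prefix of columns; you instead read the single given equation row by row, apply the one-dimensional theorem to each commuting pair of rows, and reassemble --- shorter and induction-free, and the reassembly into $X^{1\times u_2 n_B}=Y^{1\times u_1 n_A}$ is legitimate precisely because the exponents $|x_i|=u_1n_A$ and $|y_i|=u_2n_B$ do not depend on $i$, as you note. Your (b) $\implies$ (c) is essentially the paper's computation with the simpler choice $u_1=u_2=1$. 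All steps check out.
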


\begin{proof}
\ \vphantom{a} \\
(a) $\implies$ (b). Let $A$ be an array in $\Sigma^{m_1 \times n_1}$
and $B$ be an array in $\Sigma^{m_2 \times n_2}$
such that $A^{p_1 \times q_1} = B^{p_2 \times q_2}$.
By dimensional considerations we have $m_1p_1 = m_2p_2$ and $n_1q_1 = n_2q_2$. 
Define $P = A^{p_1 \times 1}$ and $Q = B^{p_2 \times 1}$. 
We have $P^{1\times q_1} = Q^{1\times q_2}$.
Viewing $P$ and $Q$ as words over $\Sigma^{m_1p_1 \times 1}$ and considering horizontal concatenation, this can be written
$P^{q_1} = Q^{q_2}$. 
By Theorem~\ref{ls-thm} there exist a word
$R$ over $\Sigma^{m_1p_1 \times 1}$ and integers $s_1, s_2$ such that
$P = R^{1\times s_1}$ and $Q = R^{1\times s_2}$. 
Let $r$ denote the number of columns of $R$ and
let $S = A[0 \dots m_1-1, 0 \dots r-1]$ and
$T = B[0\dots m_2-1, 0 \dots r-1]$. 
Observe $A = S^{1 \times s_1}$ and $B = T^{1\times s_2}$.
Considering the $r$ first columns of $P$ and $Q$, 
we have $S^{p_1 \times 1} = T^{p_2 \times 1}$.
Viewing $S$ and $T$ as words over
$\Sigma^{1 \times r}$ and considering vertical concatenation, we can rewrite $S^{p_1} = T^{p_2}$. By
Theorem~\ref{ls-thm} again, there exist a word $C$ over
$\Sigma^{1 \times r}$ and integers $r_1, r_2$ such that $S = C^{r_1 \times 1}$
and $T = C^{r_2 \times 1}$. Therefore,
$A = C^{r_1 \times s_1}$ and $B = C^{r_2 \times s_2}$.

\bigskip

\noindent (b) $\implies$ (c). Without loss of generality, assume that
the concatenation operation is $\obar$.  Let us recall that
$A = C^{r_1 \times s_1}$ and $B = C^{r_2\times s_2}$. Take
$t_1 = r_2$ and $t_2 = r_1$ and $u_1 = s_2$ and $u_2 = s_1$. Then we
have
\begin{align*}
  A^{t_{1} \times u_{1}} \obar B^{t_2 \times u_{2}}	
  &=	C^{r_{1}t_{1} \times s_{1}u_{1}} \obar C^{r_{2}t_{2} \times s_{2}u_{2}} \\
  &=	C^{r_1 t_1 \times (s_1 u_1 + s_2 u_2)}  && \text{(Observe that $r_1t_1 = r_2t_2$)} \\
  &=	C^{r_2 t_2 \times s_2 u_2} \obar C^{r_1 t_1 \times s_1 u_1}  \\
  &=	B^{t_2 \times u_{2}} \obar A^{t_{1} \times u_{1}}.
\end{align*}

\bigskip

\noindent (c) $\implies$ (a). Without loss of generality,
assume that the concatenation operation is $\obar$. Assume the
existence of positive integers $t_1, t_2, u_1, u_2$ such that
\begin{equation*}
  A^{t_1 \times u_1} \obar B^{t_2 \times u_2} =
  B^{t_2 \times u_2} \obar A^{t_1 \times u_1}.
\end{equation*}
An immediate induction allows to prove that for all positive integers
$i$ and $j$,
\begin{equation}
  A^{t_1 \times iu_1} \obar B^{t_2 \times ju_2} =
  B^{t_2 \times ju_2} \obar A^{t_1 \times iu_1}.
  \label{eqproof3}
\end{equation}
Assume that $A$ is in
$\Sigma^{m_1 \times n_1}$ and $B$ is in $\Sigma^{m_2 \times n_2}$. For
$i = n_2 u_2$ and $j = n_1 u_1$, we get $i u_1 n_1 = j u_2
n_2$. Then, by considering the first $i u_1 n_1$ columns of the
array defined in~\eqref{eqproof3}, we get
$A^{t_1 \times iu_1} = B^{t_2 \times j u_2}$.
\end{proof}

Note that generalizing condition~(1) of Theorem~\ref{ls-thm} requires
considering arrays with the same number of rows or same number of
columns. Hence the next result is a direct consequence of the previous
theorem.

\begin{corollary}
Let $A, B$ be nonempty rectangular arrays.   Then

\noindent (a) if $A$ and $B$ have the same number
of rows, $A \obar B = B \obar A$ if and only there exist a nonempty
array $C$ and integers $e, f \geq 1$ such that
$A = C^{1 \times e}$ and $B = C^{1 \times f}$;

\noindent (b) if $A$ and $B$ have the same number
of columns , $A \ominus B = B \ominus A$ if and only there exist a nonempty 
array $C$ and integers $e, f \geq 1$ such that
$A = C^{e \times 1}$ and $B = C^{f \times 1}$.
\label{foure}
\end{corollary}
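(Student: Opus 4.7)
The plan is to derive Corollary~\ref{foure} directly from Theorem~\ref{one}. The central observation is that the commutation equation $A \obar B = B \obar A$ is simply the special case of condition~(c) of the theorem with $\circ = \obar$ and all four exponents $t_1, t_2, u_1, u_2$ equal to $1$. So the forward direction of~(a) should reduce to invoking Theorem~\ref{one} and then tidying up the decomposition it produces.

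First I would apply Theorem~\ref{one} to obtain a nonempty array $C$ and positive integers $r_1, r_2, s_1, s_2$ with $A = C^{r_1 \times s_1}$ and $B = C^{r_2 \times s_2}$. The hypothesis of part~(a) is that $A$ and $B$ have the same number of rows; writing $m_c$ for the number of rows of $C$, this common row count equals both $r_1 m_c$ and $r_2 m_c$, hence $r_1 = r_2$. Setting $D := C^{r_1 \times 1}$, which has the same number of rows as both $A$ and $B$, rewrites the two decompositions as $A = D^{1 \times s_1}$ and $B = D^{1 \times s_2}$, exactly the shape demanded by the corollary.

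The reverse direction of~(a) is a direct computation: if $A = C^{1 \times e}$ and $B = C^{1 \times f}$, then both $A \obar B$ and $B \obar A$ reduce to $C^{1 \times (e+f)}$. Part~(b) is obtained by the symmetric argument, with columns in place of rows: $A \ominus B = B \ominus A$ is condition~(c) of Theorem~\ref{one} with $\circ = \ominus$, and the common-column-count hypothesis now forces $s_1 = s_2$, so absorbing that factor into the base block via $D := C^{1 \times s_1}$ yields the desired one-parameter decomposition $A = D^{r_1 \times 1}$ and $B = D^{r_2 \times 1}$.

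I do not foresee a substantive obstacle. The only delicate point is bookkeeping: the two-parameter decomposition $A = C^{r_1 \times s_1}$ furnished by Theorem~\ref{one} is not yet in the one-parameter shape the corollary asks for, and recognizing that the shared-dimension hypothesis is exactly what collapses one of the two repetition factors into the base block is the key step.
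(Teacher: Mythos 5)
Your proof is correct and follows exactly the route the paper intends: the paper gives no explicit proof, simply noting that the corollary is ``a direct consequence'' of Theorem~\ref{one}, and your argument---specializing condition~(c) with all four exponents equal to $1$, then using the shared-dimension hypothesis to force $r_1 = r_2$ and absorb that factor into the base block---is the natural way to fill in that claim. The bookkeeping step you flag (collapsing the two-parameter decomposition to a one-parameter one) is handled correctly.
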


\section{Labeled plane figures}

We can generalize condition~(5) of Theorem~\ref{ls-thm}. We begin with
the following lemma. As in the case of Corollary~\ref{foure}, we need
conditions on the dimensions.

\begin{lemma}
Let $X$ and $Y$ be rectangular arrays having same number of rows or same numbers of columns. 
In the former case set $\circ =  \obar$. In the latter case set $\circ =  \ominus$. If 
\begin{equation}
X \circ W_1 \circ W_2 \circ \cdots \circ W_i = Y \circ Z_1 \circ Z_2 
\circ \cdots \circ Z_j
\label{eq1}
\end{equation}
holds, where $W_1, W_2, \ldots, W_i, Z_1, Z_2, \ldots, Z_j \in \{X, Y\}$
for $i, j \geq 0$, then $X$ and $Y$ are powers of a third array $T$.
\label{four}
\end{lemma}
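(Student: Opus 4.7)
The plan is to reduce the statement directly to condition~(5) of Theorem~\ref{ls-thm}, applied over a suitably enlarged alphabet. Without loss of generality, assume that $\circ = \obar$, so that $X$ and $Y$ share a common number $m$ of rows; the case $\circ = \ominus$ will be symmetric. I would introduce the auxiliary alphabet $\Sigma' := \Sigma^{m \times 1}$ of columns of height $m$. Every $m \times n$ array over $\Sigma$ corresponds bijectively to a word of length $n$ over $\Sigma'$, and under this identification the operation $\obar$ becomes ordinary word concatenation.

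Under this identification, the hypothesis~\eqref{eq1} says exactly that there exist words $w, z \in \{X,Y\}^*$ (over the alphabet $\Sigma'$) with $X \cdot w = Y \cdot z$, that is, $X\{X,Y\}^* \cap Y\{X,Y\}^* \neq \emptyset$. The edge cases $i = 0$ or $j = 0$ cause no trouble, since $\{X,Y\}^*$ contains the empty word and $X \in X\{X,Y\}^*$ trivially. This is exactly condition~(5) of Theorem~\ref{ls-thm} applied to the one-dimensional words $X$ and $Y$ over $\Sigma'$; the theorem then yields a word $T \in (\Sigma')^+$ and positive integers $e, f$ such that $X = T^e$ and $Y = T^f$ in the sense of concatenation over $\Sigma'$.

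Translating back, $T$ corresponds to an $m \times k$ rectangular array over $\Sigma$ for some $k \geq 1$, and the relations $X = T^e$, $Y = T^f$ become $X = T^{1 \times e}$ and $Y = T^{1 \times f}$, so $X$ and $Y$ are powers of the common array $T$, as required. For $\circ = \ominus$ the argument is entirely symmetric, using the alphabet $\Sigma^{1 \times n}$ of rows (where $n$ is the common number of columns of $X$ and $Y$) so that $\ominus$ becomes ordinary concatenation of row-letters.

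I do not expect a significant obstacle here: essentially all of the work has already been done in Theorem~\ref{ls-thm}. The only point that deserves care is verifying that the two-dimensional ``same number of rows'' (respectively, ``same number of columns'') hypothesis is precisely what makes the column-by-column (respectively, row-by-row) reinterpretation a well-defined alphabet identification in which $\obar$ (respectively $\ominus$) agrees with one-dimensional concatenation. Once that is in place, the one-dimensional Lyndon--Sch\"utzenberger theorem does all the heavy lifting.
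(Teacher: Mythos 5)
Your proposal is correct and follows essentially the same route as the paper: the paper's proof likewise treats $X$ and $Y$ as words over the column alphabet $\Sigma^{r\times 1}$ and invokes the implication (5) $\implies$ (2) of Theorem~\ref{ls-thm}. You simply spell out the alphabet identification and the $i=0$ or $j=0$ edge cases in more detail than the paper does.
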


\begin{proof}
Without loss of generality we can assume that $X$ and $Y$ have the same number $r$ of rows.
Then the lemma is just a rephrasing of part (5) $\implies$ (2) in Theorem~\ref{ls-thm}, 
considering $X$ and $Y$ as words over $\Sigma^{r\times 1}$.
\end{proof}

%

Now we can give our maximal generalization of (5) $\implies$ (3) in
Theorem~\ref{ls-thm}.  To do so, we need the concept of labeled plane
figure (also called ``labeled polyomino'').

A {\it labeled plane figure} is a finite union of labeled cells in the
plane lattice, that is, a map from a finite subset of
$\Enn \times \Enn$ to a finite alphabet $\Sigma$. A sample plane
figure is depicted in Figure~\ref{lpf}. Notice that such a figure does
not need to be connected or convex.

\begin{figure}[H]
\begin{center}
\includegraphics{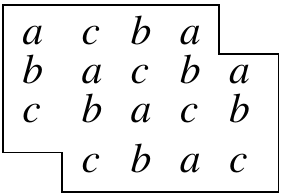}
\end{center}
\caption{A typical plane figure (from \cite{Huova:2009,Huova:2014})}
\label{lpf}
\end{figure}

Let $S$ denote a finite set of rectangular arrays.
A {\it tiling} of a labeled plane figure $F$ is an arrangement
of translates of the arrays in $S$ so that the label of every cell
of $F$ is covered by an identical entry of an element of $S$, 
and no cell of $F$ is covered by more than one entry of an element of $S$.
For example, Figure~\ref{lpf3} depicts a tiling of the labeled plane
figure in Figure~\ref{lpf} by the arrays
$[c \ b \ a]$, 
$\left[ \begin{array}{c} 
	a \\
	b \\
	c
	\end{array} \right]$,
and
$\left[ \begin{array}{cccc}
a & c & b & a \\
b & a & c & b \\
c & b & a & c
\end{array} \right]$.

\begin{figure}[H]
\begin{center}
\includegraphics{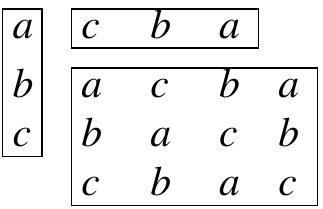}
\end{center}
\caption{Tiling of Figure~\ref{lpf}}
\label{lpf3}
\end{figure}

\begin{theorem}
Let $F$ be a labeled plane figure, and
suppose $F$ has two different tilings $U$ and $V$ by
two nonempty rectangular arrays $A$
and $B$.  Then both $A$ and $B$ are powers of a third array
$C$.
\label{plane}
\end{theorem}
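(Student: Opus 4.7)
My plan is to prove this by strong induction on $|A|+|B|+|F|$, ultimately applying Lemma~\ref{four} to a rectangular sub-region of $F$.

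I would first locate the lexicographically smallest cell $(r_0,c_0)$ of $F$ where $U$ and $V$ disagree. A short argument shows that the tile covering $(r_0,c_0)$ in each tiling must have its top-left at $(r_0,c_0)$, since otherwise that tile would also cover an earlier cell where $U$ and $V$ must agree. Without loss of generality $U$ places $A$ and $V$ places $B$ at $(r_0,c_0)$; equal dimensions would force $A=B$ via the labels of $F$, so the dimensions of $A$ and $B$ differ. Let $I_0$ be the leftmost segment of row $r_0$ starting at $c_0$ whose cells are all covered in both tilings by tiles with top at row $r_0$; one checks that $I_0$ has the same endpoint in $U$ and $V$ (otherwise a tile with top in a higher row, which must be identical in $U$ and $V$, would overlap a tile with top at row $r_0$).

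If $h_A=h_B$, the rectangle $[r_0,r_0+h_A-1]\times I_0$ lies entirely in $F$ and is tiled by both $U$ and $V$ using only $A$- and $B$-tiles, giving an equation $A\obar W_1\obar\cdots\obar W_i = B\obar Z_1\obar\cdots\obar Z_j$ with $W_k,Z_k\in\{A,B\}$, to which Lemma~\ref{four} applies directly. The case $w_A=w_B$ is symmetric, using column $c_0$ and $\ominus$.

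The main obstacle is the mixed case $h_A\neq h_B$ and $w_A\neq w_B$; WLOG $h_A<h_B$. I would truncate $B$ into its top sub-array $B'=B[0..h_A-1,\,0..w_B-1]$ and its bottom sub-array $B''=B[h_A..h_B-1,\,0..w_B-1]$, and apply the rectangle argument inside the horizontal band of height $h_A$: in this band the $B$-tiles appear only through their tops $B'$, so both tilings restrict to tilings of the band by only $A$- and $B'$-tiles, and Lemma~\ref{four} yields a common base $C'$ for $A$ and $B'$. I would then refine both tilings by expressing every $A$- and $B'$-tile as a power of $C'$ and splitting each original $B$-tile into refined $B'$ on top of $B''$. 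This produces two tilings $U'$ and $V'$ of $F$ by only $C'$ and $B''$; since $(r_0,c_0)$ is the top-left of a $B$-tile in $V$ but not in $U$, the refined tilings differ at the position $(r_0+h_A,c_0)$, and $|C'|+|B''|<|A|+|B|$, so the induction hypothesis gives a common base $C$ for $C'$ and $B''$. Combining $A = C'^{r_1\times s_1}$ and $B = B'\ominus B''$ with $C' = C^{r_3\times s_3}$ and $B'' = C^{r_4\times s_4}$ then presents both $A$ and $B$ as powers of $C$. The most delicate point is verifying that inside the band of height $h_A$ the restriction really is a two-tile tiling — that no tile from outside the band intrudes, and that the top-row coverage of $I_0$ propagates cleanly to the full band.
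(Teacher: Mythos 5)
Your proof follows essentially the same route as the paper's: isolate the topmost-leftmost cell where the two tilings disagree, pass to the horizontal band of height $\min(h_A,h_B)$ below it to obtain a one-dimensional equation to which Lemma~\ref{four} applies, and then descend on the total size of the tiles (the paper phrases the descent as a minimal counterexample and forces the leftover rows to be empty; you refine the tilings and invoke an induction hypothesis --- the same idea). Two points need repair, though neither is fatal. First, the case $w_A=w_B$, $h_A\neq h_B$ is \emph{not} symmetric to the case $h_A=h_B$: your first-disagreement cell is chosen in row-major order, so a tile meeting the vertical strip below $(r_0,c_0)$ may have its top-left at a lexicographically \emph{later} cell lying in a column to the left of $c_0$; nothing forces such a tile to agree between $U$ and $V$ or to be flush with the strip, so the $\ominus$-analogue of the band argument does not go through as stated. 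Fortunately this case is already covered by your mixed-case argument, which uses only $h_A\neq h_B$ and nowhere needs $w_A\neq w_B$, so the correct case split is simply ``$h_A=h_B$'' versus ``$h_A\neq h_B$.'' Second, the claim that the refined tilings $U'$ and $V'$ differ at $(r_0+h_A,c_0)$ can fail in the degenerate situation $C'=B''$ (same dimensions and same entries), where a $B''$-tile of $V'$ at that position is indistinguishable from a $C'$-tile of $U'$ there; but in that situation $w_{C'}=w_{B''}=w_B$ forces $B'=C'$, hence $B=C'^{2\times 1}$ and $A=C'^{1\times s_1}$, so the conclusion holds outright without the inductive call. With these two adjustments your argument is sound and matches the published proof in substance.
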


\begin{proof}
Assume that $F$ has two different tilings by rectangular arrays,
but $A$ and $B$ are not powers of a third array $C$.
Without loss of generality also assume that $F$
is the smallest such figure (with the fewest cells) and also that
$A$ and $B$ are arrays with the fewest total entries
that tile $F$, but are not powers of a third array.

Consider the leftmost cell $L$ in the top row of $F$.  If this cell is
covered by the same array, in the same orientation, in both tilings $U$ and $V$,
remove the array from $U$ and $V$, obtaining a smaller plane figure
$F'$ with the same property.   This is a contradiction, since $F$ was
assumed minimal. So $F$ must have a different array in $U$ and $V$ at
this cell. Assume $U$ has $A$ in its tiling and $V$ has $B$.

Without loss of generality, assume that the number of rows of $A$
is equal to or larger than $r$, the number of rows of $B$.
Truncate $A$ at the
first $r$ rows and call it $A'$.  Consider the topmost row of $F$.  Since it
is topmost and contains $L$ at the left, there must be nothing above $L$.
Hence the topmost row of $F$ must be tiled with the topmost rows of $A$ and $B$
from left to right, aligned at this topmost row, until either the right
end of the figure or an unlabeled cell is reached.
Restricting our attention to the $r$ rows
underneath this topmost row, we get a rectangular tiling of these $r$ rows
by arrays $A'$ and $B$ in both cases, but the tiling of $U$ begins with $A'$
and the tiling of $V$ begins with $B$.

Now apply Lemma~\ref{four} to these $r$ rows (with $\circ = \obar$).  We get that $A'$ and $B$
are both expressible as powers of some third array $T$.
Then we can write $A$ as a
concatenation of some copies of $T$ and the remaining rows of $A$ (call the
remaining rows $C$). Thus we get two tilings of $F$ in terms of $T$ and $C$.
Since $A$ and $B$ were assumed to be the smallest nonempty tiles that could
tile $F$, and $|T| \leq |B|$ and $|C| < |A|$,
the only remaining possibility is
that $T = B$ and $C$ is empty.  But then $A = A'$ and so both $A$ and $B$ are
expressible as powers of $T$.
\end{proof}

\begin{remark}
The papers \cite{Moczurad&Moczurad:2004,Moczurad:2007} claim 
a proof of Theorem~\ref{plane}, but
the partial proof provided is incorrect in some details and
missing others.
\end{remark}

\begin{remark}
As shown by Huova \cite{Huova:2009,Huova:2014}, Theorem~\ref{plane} 
is not true for three rectangular arrays.  For example, the plane figure
in Figure~\ref{lpf} has the tiling in Figure~\ref{lpf3} and also another one.
\end{remark}

\section{Primitive arrays}

In analogy with the case of ordinary words, we can define the
notion of primitive array.  An array $M$ is said to be
{\it primitive\/} if the equation $M = A^{p \times q}$ for $p, q > 0$ implies that
$p = q = 1$.  For example, the array
$$ \left[ \begin{array}{cc}
	1 & 2 \\
	2 & 1 
	\end{array} \right]$$
is primitive, but 
$$ \left[ \begin{array}{cc}
	1 & 1 \\
	1 & 1
	\end{array}\right] \text{and}
\left[ \begin{array}{cc}
	1 & 2 \\
	1 & 2
	\end{array}\right]	
	$$ 
are not, as they can be written in the form $[1]^{2 \times 2}$ or $[1\; 2]^{2 \times 1}$ respectively.

As a consequence of Theorem~\ref{one} we get another proof of Lemma 3.3 in \cite{Gamard&Richomme:2015}.
\begin{corollary}
Let $A$ be a nonempty array.  Then there exist a unique primitive
array $C$ and positive integers $i, j$ such that $A = C^{i \times j}$.
\label{prim}
\end{corollary}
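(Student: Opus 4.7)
The plan is to derive both existence and uniqueness directly from Theorem~\ref{one}.

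For existence, I would choose $C$ to be an array of minimal total size $|C|$ among all nonempty arrays satisfying $A = C^{i\times j}$ for some positive integers $i,j$. Such a choice is possible since $A = A^{1\times 1}$ witnesses that the set is nonempty, and the total-size function takes values in the positive integers. I then claim this minimal $C$ is primitive: if $C = D^{p\times q}$ for a nonempty $D$ and $p,q \ge 1$ with $pq > 1$, then $A = D^{ip \times jq}$ and $|D| < |C|$, contradicting the minimality of $|C|$.

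For uniqueness, suppose $A = C^{i_1\times j_1} = D^{i_2\times j_2}$ with both $C$ and $D$ primitive. Then $C^{i_1\times j_1} = D^{i_2\times j_2}$, which is exactly condition~(a) of Theorem~\ref{one} applied to the arrays $C$ and $D$. Hence by the implication (a)~$\Longrightarrow$~(b) of that theorem, there exist a nonempty array $E$ and positive integers $r_1,s_1,r_2,s_2$ with $C = E^{r_1\times s_1}$ and $D = E^{r_2\times s_2}$. Primitivity of $C$ forces $r_1 = s_1 = 1$, so $C = E$; primitivity of $D$ forces $r_2 = s_2 = 1$, so $D = E$. Therefore $C = D$. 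Uniqueness of $i$ and $j$ then follows from comparing dimensions: if $A$ is $m\times n$ and $C$ is $m_C \times n_C$, then $i = m/m_C$ and $j = n/n_C$ are forced.

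I do not anticipate any real obstacle here; the heart of the argument is the reduction of two different primitive-root candidates to a common root via Theorem~\ref{one}, and the mildest point to watch is simply to record that primitivity of both $C$ and $D$ collapses the common root $E$ to each of them, so that $C$ and $D$ coincide.
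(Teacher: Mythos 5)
Your proof is correct and takes essentially the same route as the paper: an extremal argument for existence (you minimize $|C|$, whereas the paper maximizes the exponents $i$ and then $j$; the two choices are interchangeable and both yield the contradiction you describe) and the implication (a) $\Rightarrow$ (b) of Theorem~\ref{one} for uniqueness, with primitivity of $C$ and $D$ collapsing the common root $E$ to each of them. No gaps.
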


\begin{proof}
Choose $i$ as large as possible such that there exist an
integer $k$ and an array $D$ such that $A = D^{i \times k}$.  
Now choose $j$ as large as possible such that there
exists an integer $j$ and an array $C$ such that $A = C^{i \times j}$.
We claim that $C$ is primitive.  For if not, then there exists
an array $B$ such that $C = B^{i' \times j'}$ for positive integers
$i, j$, not both $1$.  Then $A = C^{i \times j} = B^{ii' \times  jj'}$,
contradicting
either the maximality of $i$ or the maximality of $j$.

For uniqueness, assume $A = C^{i_1 \times j_1} = D^{i_2 \times  j_2}$ where
$C$ and $D$ are both primitive.  Then by Theorem~\ref{one}
there exists an array $E$ such that $C = E^{p_1 \times  q_1}$ and
$D = E^{p_2 \times q_2}$.  Since $C$ and $D$ are primitive, we must
have $p_1 = q_1 = 1$ and $p_2 = q_2 = 1$.  Hence $C = D$.
\end{proof}

\begin{remark}
In contrast, as Bacquey \cite{Bacquey:2015} has recently shown, 
two-dimensional biperiodic {\it infinite\/} arrays can have two 
distinct primitive roots.
\end{remark}

\section{Counting the number of primitive arrays}

There is a well-known formula for the number of primitive
words of length $n$ over a $k$-letter alphabet (see \textit{e.g.} \cite[p. 9]{Lothaire:1983}):
\begin{equation}
\psi_k(n) = \sum_{d | n} \mu(d) k^{n/d},
\label{psi}
\end{equation}
where $\mu$ is the well-known M\"obius function, defined as follows:
$$
\mu(n) = \begin{cases}
	(-1)^t, & \text{if $n$ is squarefree and the product of
		$t$ distinct primes;} \\
	0, & \text{if $n$ is divisible by a square $> 1$}.
	\end{cases}
$$
We recall the following 
well-known property of the sum of the M\"{o}bius function $\mu (d)$
(see, e.g., \cite[Thm.~263]{Hardy&Wright:2008}):
\begin{lemma}\label{lem:musum} 
\begin{equation*}
\sum_{d | n} \mu(d) = 
\begin{cases}
1,	&\text{if } n = 1; \\
0,	&\text{if } n > 1.
\end{cases}
\end{equation*}
\label{moebius}
\end{lemma}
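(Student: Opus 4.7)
The plan is to handle the two cases $n=1$ and $n>1$ separately, reducing the second case to a binomial identity via the prime factorization of $n$.

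For the trivial case $n=1$, I would simply observe that the only divisor of $1$ is $1$ itself, and by the definition of $\mu$ (taking $t=0$, so $1$ is an empty product of primes and is squarefree), we have $\mu(1) = (-1)^0 = 1$, giving the sum value $1$.

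For the case $n>1$, the key observation I would use is that $\mu(d) = 0$ whenever $d$ is not squarefree, so only squarefree divisors of $n$ contribute to the sum. I would write the prime factorization $n = p_1^{a_1} p_2^{a_2} \cdots p_k^{a_k}$ with $k \geq 1$ distinct primes $p_1, \ldots, p_k$. The squarefree divisors of $n$ are then exactly the products $\prod_{i \in S} p_i$ as $S$ ranges over subsets of $\{1, 2, \ldots, k\}$, and such a divisor satisfies $\mu\bigl(\prod_{i \in S} p_i\bigr) = (-1)^{|S|}$. Grouping squarefree divisors by their number of prime factors, I would get
\[
\sum_{d \mid n} \mu(d) \;=\; \sum_{j=0}^{k} \binom{k}{j} (-1)^j \;=\; (1 + (-1))^k \;=\; 0,
\]
where the middle equality is the binomial theorem applied to $(1 + x)^k$ evaluated at $x = -1$, valid precisely because $k \geq 1$.

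There is no real obstacle here: the entire proof hinges on recognizing that the sum collapses to a binomial expansion once non-squarefree divisors are discarded. The only step requiring any care is being explicit that the bijection between squarefree divisors of $n$ and subsets of $\{p_1, \ldots, p_k\}$ is correct and that the correspondence preserves the sign $(-1)^{|S|}$ dictated by the definition of $\mu$.
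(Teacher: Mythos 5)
Your proof is correct and complete; the paper itself gives no proof of this lemma, merely citing Hardy and Wright (Theorem 263), whose argument is exactly the one you give: discard non-squarefree divisors, index the remaining ones by subsets of the $k$ distinct prime factors, and collapse the sum to $(1-1)^k$ via the binomial theorem.
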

In this section we generalize Eq.~(\ref{psi}) to two-dimensional
primitive arrays:

\begin{theorem}
There are
$$  \psi_k(m,n) = \sum_{d_1 | m} \ \sum_{d_2 | n} \mu(d_1) \mu(d_2) k^{mn/(d_1 d_2)} $$
primitive arrays of dimension $m \times n$ over a $k$-letter
alphabet.
\end{theorem}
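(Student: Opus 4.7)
My plan is to apply a two-dimensional Möbius inversion to a suitable counting identity. The starting point is Corollary~\ref{prim}: every $m \times n$ array over a $k$-letter alphabet has a unique primitive root $C$ of some dimensions $a \times b$ with $a \mid m$ and $b \mid n$, and $A = C^{(m/a) \times (n/b)}$. Partitioning all $k^{mn}$ arrays of dimension $m \times n$ according to the dimensions of their primitive root therefore yields the identity
$$k^{mn} = \sum_{a \mid m} \sum_{b \mid n} \psi_k(a, b).$$

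The next step is to invert this relation. Writing $f(m,n) = k^{mn}$ and $g(m,n) = \psi_k(m,n)$, I would establish the two-dimensional Möbius inversion formula
$$g(m,n) = \sum_{d_1 \mid m} \sum_{d_2 \mid n} \mu(d_1)\,\mu(d_2)\, f(m/d_1,\, n/d_2),$$
from which, upon substituting $f(m/d_1, n/d_2) = k^{mn/(d_1 d_2)}$, the claimed formula drops out.

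The two-dimensional inversion itself is standard but worth spelling out. Substituting the definition of $f$ into the right-hand side and interchanging the order of summation yields
$$\sum_{a \mid m} \sum_{b \mid n} g(a,b) \left(\sum_{d_1 \mid m/a} \mu(d_1)\right)\left(\sum_{d_2 \mid n/b} \mu(d_2)\right),$$
and by Lemma~\ref{moebius} each inner parenthesized sum vanishes unless $a = m$ (respectively $b = n$), so only the $(a,b) = (m,n)$ term survives, equal to $g(m,n)$.

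The only step that genuinely uses the content of this paper is the initial counting identity, which relies on the existence and uniqueness of the two-dimensional primitive root furnished by Corollary~\ref{prim}; everything else is a routine, essentially formal, exercise in iterating one-dimensional Möbius inversion. There is no real obstacle beyond correctly bookkeeping the two index variables.
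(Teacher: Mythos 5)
Your proposal is correct and follows essentially the same route as the paper: both start from the identity $k^{mn}=\sum_{a\mid m}\sum_{b\mid n}\psi_k(a,b)$ furnished by Corollary~\ref{prim} and then carry out the two-dimensional M\"obius inversion by interchanging sums and applying Lemma~\ref{moebius}. The only difference is cosmetic (your $f$ is the paper's $g$).
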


\begin{proof}

We will use Lemma~\ref{moebius} to prove our generalized formula, which
we obtain via M\"{o}bius inversion.

Define $g(m,n) := k^{mn}$; this counts the number of $m \times n$ arrays over
a $k$-letter alphabet.    Each such array has, by Corollary~\ref{prim},
a unique primitive root of dimension $d_1 \times d_2$, where evidently
$d_1 \ | \ m$ and $d_2 \ | \ n$.    So
$g(m,n) = \sum_{\substack{d_{1} | m \\ d_{2} | n}} \psi_k(d_{1}, d_{2})$.
Then
\begin{align*}
\sum_{\substack{d_{1} | m \\ d_{2} | n}} \mu(d_{1}) \mu(d_{2}) \ g \left( \frac{m}{d_{1}}, \frac{n}{d_{2}} \right) &= \sum_{d_{1} | m} \mu(d_{1}) \sum_{d_{2} | n} \mu(d_{2}) \ g \left( \frac{m}{d_{1}}, \frac{n}{d_{2}} \right) \\
&= \sum_{d_{1} | m} \mu(d_{1}) \sum_{d_{2} | n} \mu(d_{2}) \sum_{\substack{c_{1} | m / d_{1} \\ c_{2} | n / d_{2}}} \psi_k(c_{1}, c_{2}) \\
&= \sum_{c_{1}d_{1} | m} \mu(d_{1}) \sum_{c_{2}d_{2} | n} \mu(d_{2}) \ \psi_k(c_{1}, c_{2}) \\
&= \sum_{c_{1} | m} \sum_{c_{2} | n} \ \psi_k(c_{1}, c_{2}) \sum_{\substack{d_{1} | m / c_{1} \\ d_{2} | n / c_{2}}} \mu(d_{1}) \mu(d_{2}).
\end{align*}
Let $r = m / c_{1}$ and $s = n / c_{2}$. By Lemma~\ref{lem:musum}, the
last sum in the above expression is 1 if  $r = 1$ and $s = 1$; that is,
if $c_{1} = m$ and $c_{2} = n$. Otherwise, the last sum is 0. Thus, the
sum reduces to $\psi_k(m, n)$ as required.
\end{proof}

The following table gives the first few values of the function
$\psi_2(m,n)$:
\begin{table}[H]
\begin{center}
\begin{tabular}{c|cccccccc}
& 1 & 2 & 3 & 4 & 5 & 6 & 7  \\
\hline
1 & 2&              2&              6&             12&             30&             54& 126&            \\
2 & 2&             10&             54&            228&            990&           3966&          16254&          \\
3 & 6&             54&            498&           4020&          32730&         261522&        2097018&        \\
4 & 12&            228&           4020&          65040&        1047540&       16768860&      268419060&      \\
5 & 30&            990&          32730&        1047540&       33554370&     1073708010&    34359738210&    \\
6 & 54&           3966&         261522&       16768860&     1073708010&    68718945018&  4398044397642&  \\
7 & 126&          16254&        2097018&      268419060&    34359738210&  4398044397642&562949953421058&
\end{tabular}
\end{center}
\end{table}

\begin{remark}
As a curiosity, we note that $\psi_2(2,n)$ also counts the number of
{\it pedal triangles} with period exactly $n$.  See
\cite{Valyi:1903,Kingston&Synge:1988}.
\end{remark}

\section{Checking primitivity in linear time}

In this section we give an algorithm to test primitivity of
two-dimensional arrays.  We start with a useful lemma.

\begin{lemma}
Let $A$ be an $m \times n$ array.  Let the primitive root
of row $i$ of $A$ be $r_i$ and the primitive root of column
$j$ of $A$ be $c_j$.  Then the primitive root of $A$ has dimension
$p \times q$, where $q = \lcm(|r_0|, |r_1|, \ldots, |r_{m-1}|)$
and $p = \lcm(|c_0|, |c_1|, \ldots, |c_{n-1}|)$.
\label{three}
\end{lemma}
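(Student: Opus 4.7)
The plan is to let $C$ denote the primitive root of $A$, of dimensions $p \times q$, so that $A = C^{(m/p) \times (n/q)}$, and to prove $q = \lcm(|r_0|, \ldots, |r_{m-1}|)$ and $p = \lcm(|c_0|, \ldots, |c_{n-1}|)$ by establishing two matching divisibilities in each direction.

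First I would show that $\lcm(|r_i|)$ divides $q$, with the symmetric statement $\lcm(|c_j|) \mid p$ following by the dual argument. Row $i$ of $A$ is obtained by concatenating $n/q$ copies of row $(i \bmod p)$ of $C$ horizontally, so the primitive root of row $i$ of $A$ coincides (as a word) with the primitive root of that row of $C$; in particular, $|r_i|$ divides $q$. Taking the lcm over all $i$ yields $\lcm(|r_i|) \mid q$.

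For the reverse direction, set $q^{*} := \lcm(|r_0|, \ldots, |r_{m-1}|)$ and $p^{*} := \lcm(|c_0|, \ldots, |c_{n-1}|)$. Because row $i$ has one-dimensional period $|r_i|$, which divides $q^{*}$, the identity $A[i,j] = A[i, j \bmod q^{*}]$ holds for all valid $i, j$; symmetrically, $A[i,j] = A[i \bmod p^{*}, j]$. Composing the two gives $A[i,j] = A[i \bmod p^{*}, j \bmod q^{*}]$, which is exactly the assertion $A = D^{(m/p^{*}) \times (n/q^{*})}$ where $D := A[0 \dots p^{*}-1,\ 0 \dots q^{*}-1]$. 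By Corollary~\ref{prim}, the primitive root of $A$ is unique; since both $C$ and the primitive root of $D$ expand to $A$, the primitive root of $D$ must again be $C$. Hence $D$ is itself a rectangular power of $C$, which forces $p \mid p^{*}$ and $q \mid q^{*}$.

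Combining the two pairs of divisibilities gives $p = p^{*}$ and $q = q^{*}$, proving the lemma. The main technical step is the reverse direction: one must leverage the \emph{separate} horizontal and vertical one-dimensional periodicities to produce a genuine two-dimensional rectangular power $A = D^{(m/p^{*}) \times (n/q^{*})}$, and then invoke the uniqueness clause of Corollary~\ref{prim} to identify $D$'s primitive root with $C$. Everything else is routine bookkeeping with the indexing conventions inherited from Section~2.
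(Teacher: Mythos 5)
Your proof is correct, and its overall skeleton (two matching divisibilities in each dimension) is the same as the paper's; the first half --- each $|r_i|$ divides the width of the primitive root because row $i$ of $A$ is a power of the corresponding row of the root, hence the lcm divides it --- is essentially identical. Where you diverge is in the converse direction. The paper stays inside the primitive root $P$ itself: having shown $q \mid n'$, it observes that every row of $P$ then has period $q$, so $P = Q^{1 \times (n'/q)}$ for $Q = P[0..m'-1,0..q-1]$, which contradicts primitivity of $P$ unless $n' = q$; the row and column cases are handled completely separately, and no appeal to uniqueness of the primitive root is needed. You instead combine the horizontal and vertical one-dimensional periodicities into a genuine two-dimensional power $A = D^{(m/p^{*}) \times (n/q^{*})}$ with $D = A[0..p^{*}-1, 0..q^{*}-1]$, and then invoke the uniqueness clause of Corollary~\ref{prim} to force the primitive root of $D$ to be $C$, giving $p \mid p^{*}$ and $q \mid q^{*}$. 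Both routes are sound (and your implicit uses of $p^{*} \mid m$, $q^{*} \mid n$ are easily justified since each $|r_i|$ divides $n$ and each $|c_j|$ divides $m$); the paper's argument is slightly more elementary in that it needs only the definition of primitivity, while yours leans on the previously established uniqueness of the two-dimensional primitive root but in exchange makes explicit the step --- assembling a rectangular power from separate row and column periods --- that the paper leaves implicit in its treatment of $P$.
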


\begin{proof}
Let $P$ be the primitive root of the array $A$, of dimension
$m' \times n'$.  Then the row $A[i,0..n-1]$ is periodic with period $n'$.
But since the primitive root of $A[i,0..n-1]$ is of length
$r_i$, we know that $|r_i|$ divides $n'$.    It follows
that $q \divides n'$, where 
$q = \lcm(|r_0|, |r_1|, \ldots, |r_{m-1}|)$.
Now suppose $n' \not= q$.  Then since $q \divides n'$ we must
have $n'/q > 1$.  Define $Q := P[0..m'-1,0..q-1]$.  Then
$Q^{1 \times (n'/q)} = P$, contradicting our hypothesis that
$P$ is primitive.   It follows that $n' = q$, as claimed.

Applying the same argument to the columns proves the claim about $p$.
\end{proof}

Now we state the main result of this section.

\begin{theorem}
We can check primitivity of an $m \times n$
array and compute the primitive root in $O(mn)$ time, for
fixed alphabet size.
\label{primcheck}
\end{theorem}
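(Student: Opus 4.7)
The plan is to use Lemma~\ref{three} as a reduction from the two-dimensional primitivity question to the one-dimensional primitivity question on rows and columns, and then to invoke the well-known linear-time algorithm for computing the primitive root of a one-dimensional string (based on the failure function of Knuth--Morris--Pratt: if $\pi$ is the failure function of a word $w$ of length $n$, then the shortest period of $w$ is $n - \pi[n-1]$, and $w$ is the power of its primitive root exactly when this period divides $n$).

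Concretely, I would proceed in the following steps. First, for each of the $m$ rows, compute the length $|r_i|$ of the primitive root of that row by running KMP on the row; each row takes $O(n)$ time for fixed alphabet, giving $O(mn)$ total. Second, do the symmetric computation for each of the $n$ columns in $O(mn)$ total (accessing the column entries with stride $n$ is fine in the RAM model). Third, compute $q = \lcm(|r_0|,\ldots,|r_{m-1}|)$ and $p = \lcm(|c_0|,\ldots,|c_{n-1}|)$ incrementally; since every $|r_i|$ divides $n$ we have $q \mid n$, and similarly $p \mid m$, so all intermediate values fit in a machine word and each $\lcm$ step costs $O(\log(mn))$ via Euclid, for a negligible $O((m+n)\log(mn))$ overall. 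Finally, by Lemma~\ref{three}, $A$ is primitive iff $(p,q) = (m,n)$, and in any case the primitive root is precisely the subarray $A[0..p-1,\,0..q-1]$, which can be extracted in $O(pq) = O(mn)$ time.

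Correctness of the output is immediate from Lemma~\ref{three}: that lemma pins down the exact dimensions $p \times q$ of the primitive root, and since $A$ is equal to the $(m/p) \times (n/q)$ repetition of its primitive root, the top-left $p \times q$ subarray is that primitive root. The primitivity test reduces to checking whether $p = m$ and $q = n$.

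The only mildly delicate point, and the main thing to verify carefully, is the linear-time bound for the column computation: the rows are contiguous in the standard row-major storage, but the columns are not, so one should either state explicitly that the RAM model permits $O(m)$ access to any column, or else transpose the array once in $O(mn)$ time before running KMP on the (now-row-wise) columns. With that caveat addressed, the total time is dominated by the $2m + 2n$ linear-time primitive-root computations on strings of length $n$ and $m$ respectively, giving the claimed $O(mn)$ bound.
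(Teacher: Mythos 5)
Your proposal is correct and follows essentially the same route as the paper: both reduce to Lemma~\ref{three}, compute the primitive root of each row and each column via a linear-time KMP-based subroutine, take the two lcm's, and read off the primitive root as the top-left $p \times q$ subarray. The only differences are implementation details (you use the failure-function characterization of the shortest period where the paper searches for $u$ inside $u_F u_L$, and you add explicit remarks on the cost of the lcm computations and of column access), which do not change the argument.
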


\begin{proof}
As is well known,
a word $u$ is primitive if and only if $u$ is not an interior factor of its
square $uu$ \cite{ChoffrutKarhumaki1997Combinatorics}; that is, $u$ is
not a factor of the word $u_{F}u_{L}$, where $u_{F}$ is $u$ with the
first letter removed and $u_{L}$ is $u$ with the last letter removed.
We can test whether $u$ is a factor of $u_{F}u_{L}$ using a linear-time
string matching algorithm, such as the Knuth-Morris-Pratt algorithm
\cite{KnuthMorrisPratt1977FastPatternMatching}. If the algorithm
returns no match, then $u$ is indeed primitive.  Furthermore,
if $u$ is not primitive, the length of its primitive root is given
by the index (starting with position $1$) of the first match of
$u$ in $u_{F}u_{L}$.  
We assume that there
exists an algorithm \textsc{1DPrimitiveRoot} to obtain the primitive
root of a given word in this manner.

We use Lemma~\ref{three} as our basis for the following
algorithm to compute the primitive root of a rectangular array. This
algorithm takes as input an array $A$ of dimension $m \times n$ and
produces as output the primitive root $C$ of $A$ and its dimensions.

\begin{algorithm}
\caption{Computing the primitive root of $A$}\label{alg:computeprimitiveroot}
\begin{algorithmic}[1]
\Procedure{2DPrimitiveRoot}{$A,m,n$}
	\For{$0 \leq i < m$}								\Comment{compute primitive root of each row}
		\State $r_{i} \gets \textsc{1DPrimitiveRoot}(A[i, \ 0..n-1])$
	\EndFor
	\State $q \gets \lcm(|r_{0}|, |r_{1}|, \dots, |r_{m-1}|)$		\Comment{compute lcm of lengths of primitive roots of rows}
	\For{$0 \leq j < n$}
	\Comment{compute primitive root of each column}
		\State $c_{j} \gets \textsc{1DPrimitiveRoot}(A[0..m-1, j])$
	\EndFor
	\State $p \gets \lcm(|c_0|, |c_1|, \dots, |c_{n-1}|)$ \Comment{compute lcm of lengths of primitive roots of columns}
	\For{$0 \leq i < p$}
		\For{$0 \leq j < q$}
			\State $C[i,j] \gets A[i,j]$
			\EndFor
		\EndFor
	\State \textbf{return} $(C, p, q)$
\EndProcedure
\end{algorithmic}
\end{algorithm}

The correctness follows immediately from Lemma~\ref{three}, and
the running time is evidently $O(mn)$.
\end{proof}

\begin{remark}
The literature features a good deal of previous work on pattern
matching in two-dimensional arrays. The problem of finding every
occurrence of a fixed rectangular pattern in a rectangular array was
first solved independently by Bird \cite{Bird1977PatternMatching} and
by Baker \cite{Baker1978StringMatching}. Amir and Benson later
introduced the notion of two-dimensional periodicity in a series of
papers \cite{AmirBenson1992AlphabetIndependent,
AmirBenson19922DPeriodicity, AmirBenson19982DPeriodicity}. 
Mignosi, Restivo, and Silva \cite{Mignosi&Restivo&Silva:2003} considered
two-dimensional generalizations of the Fine-Wilf theorem.  A survey of
algorithms for two-dimensional pattern matching may be found in Chapter
12 of Crochemore and Rytter's text
\cite{CrochemoreRytter1994TextAlgorithms}.  
Marcus and Sokol \cite{Marcus&Sokol:2013} considered two-dimensional
Lyndon words.  Crochemore, Iliopoulos, and Korda
\cite{Crochemore&Iliopoulos&Korda:1998} and, more recently, Gamard and
Richomme \cite{Gamard&Richomme:2015}, considered
quasiperiodicity in two dimensions.  
However, with the exception of this latter paper, where Corollary~\ref{prim}
can be found, none of
this work is directly related to the problems we consider in this
paper.
\end{remark}

\begin{remark}
One might suspect that it is easy to reduce 2-dimensional primitivity
to 1-dimensional primitivity by considering the array $A$ as a 
1-dimensional word, and taking the elements in row-major
or column-major order.  However, the natural conjectures that
$A$ is primitive if and only if (a) either its corresponding
row-majorized or column-majorized
word is primitive, or (b) both its row-majorized or column-majorized
words are primitive, both fail.
For example,
assertion (a) fails because 
$$ \left[ \begin{array}{cc}
{\tt a} & {\tt a} \\
{\tt b} & {\tt b}
\end{array}\right]$$
is not primitive, while its row-majorized word ${\tt aabb}$ 
is primitive.
Assertion (b) fails because
$$ \left[ \begin{array}{ccc}
{\tt a} & {\tt b} & {\tt a} \\
{\tt b} & {\tt a} & {\tt b} 
\end{array} \right]$$
is 2-dimensional primitive, but its row-majorized word 
${\tt ababab}$ is not.  
\end{remark}

\section*{Acknowledgments}

Funded in part by a grant from NSERC.  We are grateful to the referees
for several suggestions.


\begin{thebibliography}{99}

\bibitem{AmirBenson19922DPeriodicity}
A.~Amir and G.~E. Benson.
\newblock Two-dimensional periodicity and its applications.
\newblock In {\em Proc. 3rd Ann. {ACM-SIAM} Symp. Discrete Algorithms (SODA
  '92)}, pp.  440--452, 1992.

\bibitem{AmirBenson1992AlphabetIndependent}
A.~Amir and G.~E. Benson.
\newblock Alphabet independent two-dimensional pattern matching.
\newblock In {\em Proc. 24th Ann. {ACM} Symp. Theory of Computing (STOC '92)},
  pp.  59--68, 1992.

\bibitem{AmirBenson19982DPeriodicity}
A.~Amir and G.~E. Benson.
\newblock Two-dimensional periodicity in rectangular arrays.
\newblock {\em {SIAM} J. Comput.} {\bf 27}(1) (1998), 90--106.

\bibitem{Bacquey:2015}
N.~Bacquey.
\newblock Primitive roots of bi-periodic infinite pictures.
\newblock In F.~Manea and D.~Nowotka, editors, {\em WORDS 2015 Conference,
  Local Proceedings}, Kiel Computer Science Series, 2015/5, pp.  1--16. 2015.
\newblock Available at \url{https://hal.archives-ouvertes.fr/hal-01178256}.

\bibitem{Baker1978StringMatching}
T.~P. Baker.
\newblock A technique for extending rapid exact-match string matching to arrays
  of more than one dimension.
\newblock {\em {SIAM} J. Comput.} {\bf 7}(4) (1978), 533--541.

\bibitem{Bird1977PatternMatching}
R.~S. Bird.
\newblock Two-dimensional pattern matching.
\newblock {\em Inform. Process. Lett.} {\bf 6}(5) (1977), 168--170.

\bibitem{ChoffrutKarhumaki1997Combinatorics}
C.~Choffrut and J.~Karhum{\"a}ki.
\newblock Combinatorics of words.
\newblock In A.~Salomaa and G.~Rozenberg, editors, {\em Handbook of Formal
  Languages}, Vol.~1, pp.~329--438. Springer-Verlag, 1997.

\bibitem{Crochemore&Iliopoulos&Korda:1998}
\newblock M. Crochemore, C. S. Iliopoulos, and M. Korda.
\newblock Two-dimensional prefix string matching and
covering on square matrices. 
{\it Algorithmica} {\bf 20} (1998), 353--373.

\bibitem{CrochemoreRytter1994TextAlgorithms}
M.~Crochemore and W.~Rytter.
\newblock {\em Text Algorithms}.
\newblock Oxford University Press, 1994.

\bibitem{Gamard&Richomme:2015}
G. Gamard and G. Richomme.
\newblock Coverability in two dimensions.
\newblock In A.-H. Dediu, E. Formenti, C. Mart\'{\i}n-Vide, and B. Truthe, 
editors, {\em LATA 2015}, Vol.~8977 of {\it Lect. Notes in Computer
Sci.}, pp.~402--413.  Springer-Verlag, 2015. 
Also see {\tt http://arxiv.org/abs/1506.08375}.

\bibitem{Giammarresi&Restivo:1997}
D.~Giammarresi and A.~Restivo.
\newblock Two-dimensional languages.
\newblock In A.~Salomaa and G.~Rozenberg, editors, {\em Handbook of Formal
  Languages}, Vol.~3, pp.~215--267. Springer-Verlag, 1997.

\bibitem{Hardy&Wright:2008}
G.~H. Hardy and E.~M. Wright.
\newblock {\em An Introduction to the Theory of Numbers}.
\newblock Oxford University Press, 6th edition, 2008.

\bibitem{Huova:2009}
M. Huova.
\newblock A note on defect theorems for $2$-dimensional words and trees.
\newblock {\it J. Automata Lang. Combin.} {\bf 14} (2009), 203--209.

\bibitem{Huova:2014}
M. Huova.
\newblock Combinatorics on words:  new aspects on avoidability,
defect effect, equations and palindromes.
\newblock Ph.~D.~thesis, Turku Centre for Computer Science, Finland.
\newblock TUCS Dissertations No.~172, April 2014.

\bibitem{Kingston&Synge:1988}
J.~G. Kingston and J.~L. Synge.
\newblock The sequence of pedal triangles.
\newblock {\em Amer. Math. Monthly} {\bf 95} (1988), 609--620.

\bibitem{KnuthMorrisPratt1977FastPatternMatching}
D.~E. Knuth, J.~H. Morris, and V.~R. Pratt.
\newblock Fast pattern matching in strings.
\newblock {\em {SIAM} J. Comput.} {\bf 6}(2) (1977), 323--350.

\bibitem{Lothaire:1983}
M. Lothaire.
\newblock {\it Combinatorics on Words}.
\newblock Encyclopedia of Mathematics and Its Applications, Vol.~17.
\newblock Addison-Wesley, 1983.

\bibitem{Lyndon&Schutzenberger:1962}
R.~C. Lyndon and M.~P. {Sch\"utzenberger}.
\newblock The equation {$a^M = b^N c^P$} in a free group.
\newblock {\em Michigan Math. J.} {\bf 9} (1962), 289--298.

\bibitem{Marcus&Sokol:2013}
S.~Marcus and D.~Sokol.
\newblock On two-dimensional {Lyndon} words.
\newblock In O.~Kurland, M.~Lewenstein, and E.~Porat, editors, {\em SPIRE
  2013}, Vol.~8214 of {\em Lect. Notes in Comp. Sci.}, pp.~206--217.
  Springer-Verlag, 2013.

\bibitem{Mignosi&Restivo&Silva:2003}
F. Mignosi, A. Restivo, and P. V. Silva.
\newblock On Fine and Wilf's theorem for bidimensional words.
\newblock {\it Theoret. Comput. Sci.} {\bf 292} (2003), 245--262.

\bibitem{Moczurad&Moczurad:2004}
M. Moczurad and W. Moczurad.
\newblock Some open problems in decidability of brick
(labelled polyomino) codes.
\newblock In K.-Y. Chwa and J. I. Munro, eds., 
{\it COCOON 2004}, Vol.~3106 of {\em Lect. Notes in Comput. Sci.}, pp.~72--81,
Vol.~3106, Springer-Verlag, 2004.

\bibitem{Moczurad:2007}
W. Moczurad.
\newblock Defect theorem in the plane.
\newblock {\it RAIRO-Theor. Inf. Appl.} {\bf 41} (2007) 403--409.

\bibitem{Shallit:2009}
J.~Shallit.
\newblock {\em A Second Course in Formal Languages and Automata Theory}.
\newblock Cambridge University Press, 2009.

\bibitem{Valyi:1903}
J.~V\'alyi.
\newblock {\"Uber} die {Fu{\ss}punktdreiecke}.
\newblock {\em Monatsh. Math.} {\bf 14} (1903), 243--252.
\end{thebibliography}
\end{document}